\DeclareMathOperator{\tr}{tr}
\DeclareMathOperator{\minimize}{minimize}
\DeclareMathOperator{\voi}{VoI}
\DeclareMathOperator{\EXP}{\mathsf{E}}
\DeclareMathOperator{\Cov}{\mathsf{cov}}
\DeclareMathOperator{\Prob}{\mathsf{p}}
\DeclareMathOperator{\aoi}{AoI}
\begin{document}

\renewcommand\footnotemark{}
\title*{Relation between Value and Age of Information in Feedback Control \thanks{\hspace{-4.2mm}Corresponding Author: Touraj Soleymani (touraj@kth.se). Published as a book chapter by \emph{Cambridge University Press}.}}

\titlerunning{Relation between Value and Age of Information in Feedback Control}
\author{Touraj Soleymani, John S. Baras, and Karl H. Johansson}

\institute{Touraj Soleymani \at Royal Institute of Technology, Sweden
\and John S. Baras \at University of Maryland, United States
\and Karl H. Johansson \at Royal Institute of Technology, Sweden}

\maketitle

\vspace{-10mm}
\abstract{In this chapter, we investigate the value of information as a more comprehensive instrument than the age of information for optimally shaping the information flow in a networked control system. In particular, we quantify the value of information based on the variation in a value function, and discuss the structural properties of this metric. Through our analysis, we establish the mathematical relation between the value of information and the age of information. We prove that the value of information is in general a function of an estimation discrepancy that depends on the age of information and the primitive variables. In addition, we prove that there exists a condition under which the value of information becomes completely expressible in terms of the age of information. Nonetheless, we show that this condition is not achievable without a degradation in the performance of the system.}

\section{Introduction}
This chapter is concerned with networked control systems\index{system! networked control}, which are distributed feedback systems\index{system! distributed feedback} where the underlying components, i.e., sensors\index{sensor}, actuators\index{actuator}, and controllers\index{controller}, are connected to each other via communication channels\index{channel}. Such systems have flexible architectures with potential applications in a wide range of areas such as autonomous driving, remote surgery, and space exploration. In these systems, the information that flows through the communication channels plays a key role. Consider, for instance, an unmanned vehicle that should be navigated remotely. For the purpose of the navigation of this vehicle, the sensory information needs to be transmitted from the sensors of the vehicle to a remote controller, and in turn the control commands need to be transmitted from the controller to the actuators of the vehicle. Note that, on the one hand, there exist often different costs and constraints that can hinder the timely updates of the components of a networked control system. On the other hand, it is not hard to see that steering a networked control system based on quite out-of-date information could result in a catastrophic failure. Given this situation, our objective here is to put forward a systematic way for optimally shaping the information flow in a networked control system such that a specific level of performance is attained.

As discussed in the various chapters of the present book, the \emph{age of information}~\cite{yates2021age}, which measures the freshness of information at each time, is an appropriate instrument for shaping the information flow in many networked real-time systems\index{system! networked real-time}. However, we show in this chapter that a more comprehensive instrument than the age of information is required when one deals with networked control systems. This other metric is the \emph{value of information}~\cite{touraj-thesis, voi, voi2}, which measures the difference between the benefit and the cost of information at each time.

In our study, we consider a basic networked control system where the channel between the sensor and the controller is costly, and the channel between the controller and the actuator is cost-free. We begin our analysis by making a trade-off that is defined between the packet rate and the regulation cost. The decision makers are an event trigger and the controller. The event trigger decides about the transmission of information from the sensor to the controller at each time, and the controller decides about the control input for the actuator at each time. For the purpose of this study, we design the controller based on the certainty-equivalence principle, and mainly focus on the design of the event trigger. We show that the optimal triggering policy permits a transmission only when the value of information is nonnegative. We quantify the value of information at each time as the variation in a value function with respect to a piece of information that can be communicated to the controller. Through our analysis, we establish the mathematical relation between the value of information and the age of information. We prove that the value of information is in general a function of an estimation discrepancy that depends on the age of information and the primitive variables. In addition, we prove that there exists a condition associated with the information structure of the system under which the value of information becomes completely expressible in terms of the age of information. Nonetheless, we show that this condition is not achievable without a degradation in the performance of the system.

There are multiple works that are closely related to our study~\cite{lipsa2011, molin2017, chakravorty2016, rabi2012, guo2021-IT, guo2021-TAC, sun2019}. In these works, optimal triggering policies were characterized in settings that are different from ours. In particular, Lipsa and Martins~\cite{lipsa2011} used majorization theory to study the estimation of a scalar Gauss-Markov process, and proved that the optimal triggering policy is symmetric threshold. Molin and Hirche~\cite{molin2017} studied the convergence properties of an iterative algorithm for the estimation of a scalar Markov process with arbitrary noise distribution, and found a result coinciding with that in~\cite{lipsa2011}. Chakravorty and Mahajan~\cite{chakravorty2016} investigated the estimation of a scalar autoregressive Markov process with symmetric noise distribution based on renewal theory, and proved that the optimal triggering policy is symmetric threshold. Rabi~\emph{et~al.}\cite{rabi2012} formulated the estimation of the scalar Wiener and scalar Ornstein-Uhlenbeck processes as an optimal multiple stopping time problem by discarding the signaling effect, and showed that the optimal triggering policy is symmetric threshold. Guo and Kostina \cite{guo2021-IT} addressed the estimation of the scalar Wiener and scalar Ornstein-Uhlenbeck processes in the presence of the signaling effect, and obtained a result that matches with that in~\cite{rabi2012}. They also looked at the estimation of the scalar Wiener process with fixed communication delay in the presence of the signaling effect in~\cite{guo2021-TAC}, and obtained a similar structural result. Furthermore, Sun~\emph{et~al.}~\cite{sun2019} studied the estimation of the scalar Wiener process with random communication delay by discarding the signaling effect, and showed that the optimal triggering policy is symmetric threshold. In contrast to the above works, we here focus on the estimation of a Gauss-Markov process\index{process! Gauss-Markov} with random processing delay, and characterize the optimal triggering policy. Our study builds on the framework we developed previously for the value of information in~\cite{touraj-thesis, voi, voi2, mywodespaper}.

This chapter is organized in the following way. We formulate the problem in Section~\ref{sec2}. Then, we present the main results in Section~\ref{sec3}, and provide a numerical example in Section~\ref{sec4}. Finally, we conclude the chapter in Section~\ref{sec5}.

\section{Problem Statement}\label{sec2}
In this section, we mathematically formulate the rate-regulation trade-off as a stochastic optimization problem. In our setting, the process under control satisfies the following equations with an $n$-dimensional state, an $m$-dimensional input, and an $n$-dimensional output:
\begin{align}
	x_{k+1} &= A x_k + B u_k + w_k \label{c1:eq:sys}\\[2\jot]
	y_k &= x_{k-\tau_k} \label{c1:eq:output}	
\end{align}
for $ k \in \mathcal{K} = \{0,1,\dots,N\}$ with initial condition $x_0$, where $x_k \in \mathbb{R}^n$ is the state of the process, $A \in \mathbb{R}^{n \times n}$ is the state matrix, $B \in \mathbb{R}^{n \times m}$ is the input matrix, $u_k \in \mathbb{R}^m$ is the control input decided by the controller\index{controller}, $w_k \in \mathbb{R}^n$ is a Gaussian white noise with zero mean and covariance $W \succ 0$, $y_k \in \mathbb{R}^n$ is the output of the process, $\tau_k \in \mathbb{N}_0$ is a random processing delay\index{delay! random processing} with known probability distribution, and $N$ is a finite time horizon. It is assumed that $x_0$ is a Gaussian vector with mean $m_0$ and covariance $M_0$, that $\tau_0 = 0$, and that $x_0$, $w_k$, and $\tau_k$  are mutually independent for all $k \in \mathcal{K}$.

The random processing delay can be due to various sensing constraints. One example of the output model is:
\begin{align}\label{outputmodelused}
y_k = \left\{
  \begin{array}{l l}
     x_k, & \ \text{with probability}\ p_k, \\
     x_{k-d}, & \ \text{otherwise}
  \end{array} \right.
\end{align}
where $d$ is a fixed delay and $p_k$ is a probability. In (\ref{outputmodelused}), the output of the process at time $k$ is either the current state or a delayed state. Note that when no sensing constraints exist, we have $y_k = x_k$ for $k \in \mathcal{K}$.

The channel that connects the sensor to the controller is costly, errorless, and with one-step delay. Let $\delta_{k} \in \{0,1\}$ be the transmission variable in this channel decided the event trigger\index{event trigger}. Then, $y_k$ is transmitted over the channel and received by the controller after one-step delay if $\delta_k = 1$. Otherwise, nothing is transmitted and received. It is assumed that, in a successful transmission, the total delay of the received observation can be detected by the controller. For the purpose of this study, we design the controller based on the certainty-equivalence principle without the signaling effect (see \cite{voi, voi2} for more details). This allows us to concentrate on the design of the event trigger. Let the information sets of the event trigger and the controller, including only causal information, at time $k$ be denoted by $\mathcal{I}^e_k$ and $\mathcal{I}^c_k$, respectively. We say that a triggering policy $\pi$ is admissible if $\pi = \{\Prob(\delta_k | \mathcal{I}^e_k) \}_{k=0}^{N}$, where $\Prob(\delta_k | \mathcal{I}^e_k)$ is a Borel measurable transition kernel. We represent the admissible triggering policy set by $\mathcal{P}$. The rate-regulation trade-off\index{trade-off! rate-regulation} between the packet rate\index{packet rate} and the regulation cost\index{regulation cost} can then be expressed by the following stochastic optimization problem\index{problem! stochastic optimization}:
\begin{align}\label{eq:main_problem1}
	\underset{\pi \in \mathcal{P}}{\minimize} \ \Phi(\pi)
\end{align}
where
\begin{align}
	\Phi(\pi) =  \textstyle  \EXP\Big[\frac{(1-\lambda)}{N+1}\sum_{k=0}^{N} \ell \delta_k  +  \textstyle \frac{\lambda}{N+1} \textstyle\sum_{k=0}^{N} \big(\| x_{k+1}\|^2_{Q} + \|u_k\|^2_{R} \big) \Big]
\end{align}
where $\lambda \in (0,1)$ is the trade-off multiplier, $\ell$ is a weighting coefficient, $Q \succeq 0$ and $R \succ 0$ are weighting matrices, and $\|.\|$ represents the Euclidean norm. In the following, we seek the optimal triggering policy $\pi^\star$ associated with the problem in~(\ref{eq:main_problem1}).

\section{Main Results}\label{sec3}
First, we provide a formal definition of the age of information\index{age of information}, a metric that measures the freshness of information at a component at each time.

\begin{definition}\label{def:aoi}
The age of information at a component at time $k$ is the time elapsed since the generation of the freshest observation received by that component, i.e.,
\begin{align}
\aoi_k = k - t
\end{align}
where $t \leq k$ is the time of the freshest received observation.
\end{definition}

Let $\zeta_k \in [0, \zeta_{k-1} + 1]$ be the age of information at the event trigger. This implies that 
\begin{align}
\zeta_{k} = \left\{
  \begin{array}{l l}
     \tau_k, & \ \text{if} \ \tau_k < \zeta_{k-1}+1, \\
     \zeta_{k-1} + 1, & \ \text{otherwise}
  \end{array} \right.
\end{align}
for $k \in \mathcal{K}$ with initial condition $\zeta_0 = 0$, given the assumption $\tau_0=0$. Moreover, let $\eta_k \in [0, \eta_{k-1} + 1]$ be age of information at the controller. This implies that
\begin{align}\label{age-cont}
\eta_{k} = \left\{
  \begin{array}{l l}
     \zeta_{k-1} +1, & \ \text{if} \ \delta_{k-1} = 1, \\
     \eta_{k-1} + 1, & \ \text{otherwise}
  \end{array} \right.
\end{align}
for $k \in \mathcal{K}$ with initial condition $\eta_0 = \infty$, by convention. We say that the generated observation $x_{k-\tau_k}$ at time $k$ is informative if $\tau_k < \zeta_{k-1}+1$. Otherwise, we say it is obsolete. Since the obsolete observations\index{observation! obsolete} can safely be discarded, the information set of the event trigger can be determined by the set of the informative observations\index{observation! informative}, the communicated informative observations, and the previous decisions, i.e., $\mathcal{I}^{e}_k = \{ x_{t-\zeta_t}, x_{t-\eta_t}, \delta_s, u_s \ | \ 0 \leq t \leq k, 0 \leq s < k \}$, and the information set of the controller by the set of the communicated informative observations and the previous decisions, i.e., $\mathcal{I}^{c}_k = \{ x_{t-\eta_t}, \delta_s, u_s \ | \ 0 \leq t \leq k, 0 \leq s < k  \}$.

Given these information sets, in the next two lemmas, we derive the minimum mean-square-error estimators\index{estimator! minimum mean-square-error} at the event trigger and at the controller.

\begin{lemma}\label{prop:1lemma}
The conditional mean $\EXP[x_k | \mathcal{I}^e_k]$ is the minimum mean-square-error estimator at the event trigger, and satisfies
\begin{align}\label{eq:est-at-et}
	\EXP[x_k | \mathcal{I}^e_k] = A^{\zeta_k} x_{k-\zeta_k} + \textstyle \sum_{t=1}^{\zeta_k} A^{t-1} B u_{k-t}
\end{align}
for $k \in \mathcal{K}$.
\end{lemma}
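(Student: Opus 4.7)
The plan is to obtain the formula by iterating the state recursion from time $k-\zeta_k$ up to time $k$ and then taking conditional expectation with respect to $\mathcal{I}^e_k$. First, I would invoke the standard fact that when $x_k$ is square-integrable the minimum mean-square-error estimator given any sigma-algebra is the conditional mean, so it suffices to compute $\EXP[x_k \mid \mathcal{I}^e_k]$ explicitly.

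Next, I would unroll the state equation (\ref{c1:eq:sys}) recursively starting from the freshest informative observation $x_{k-\zeta_k}$. By induction on $\zeta_k$, one obtains
\begin{align*}
x_k = A^{\zeta_k} x_{k-\zeta_k} + \sum_{t=1}^{\zeta_k} A^{t-1} B u_{k-t} + \sum_{t=1}^{\zeta_k} A^{t-1} w_{k-t}.
\end{align*}
The first two terms on the right-hand side are $\mathcal{I}^e_k$-measurable: $\zeta_k$ is a deterministic function of $\tau_0,\dots,\tau_k$ (hence of quantities encoded in $\mathcal{I}^e_k$), the observation $x_{k-\zeta_k}$ is the freshest informative sample, and each $u_{k-t}$ for $t \geq 1$ belongs to the control history $\{u_s : s < k\} \subset \mathcal{I}^e_k$. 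Taking conditional expectation termwise then reduces the problem to showing that $\EXP[w_{k-t} \mid \mathcal{I}^e_k] = 0$ for $t = 1,\dots,\zeta_k$.

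To justify this last step I would condition on the event $\{\zeta_k = j\}$ for each fixed $j$, so that the noises in question are $w_{k-1},\dots,w_{k-j}$. These are noise terms occurring strictly after the most recent observation time $k-\zeta_k$; since the noises are i.i.d.\ Gaussian with zero mean and are mutually independent from $x_0$ and from the delay sequence $\{\tau_s\}$, and since every element of $\mathcal{I}^e_k$ is built from $x_0$, from the noises $w_s$ with $s \leq k-\zeta_k-1$, from the delay sequence, and from past controls (which are themselves functions of earlier such data through the closed loop), the noises $w_{k-1},\dots,w_{k-j}$ are independent of $\mathcal{I}^e_k$ on this event. Hence their conditional means vanish and the claimed identity follows.

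The main obstacle I anticipate is the independence argument in the final step, because $\zeta_k$ is itself random and the controls $u_s$ depend on past observations through the closed loop. The cleanest way to handle it is the event-wise conditioning described above, together with the earlier observation that the monotonicity relation $k - \zeta_k \geq t - \zeta_t$ for $t \leq k$ guarantees that no element of $\mathcal{I}^e_k$ carries information about the state after time $k-\zeta_k$ other than through the controls, which are themselves measurable with respect to the data prior to each post-observation noise.
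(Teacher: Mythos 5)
Your proposal is correct and follows the same route as the paper: identify the conditional mean as the MMSE estimator, unroll the state recursion from the freshest observation $x_{k-\zeta_k}$, and take conditional expectation so that the post-observation noise terms vanish. The paper's proof is a two-line sketch of exactly this argument; your event-wise conditioning on $\{\zeta_k = j\}$ simply supplies the measurability and independence details the paper leaves implicit.
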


\begin{proof}
Given $\mathcal{I}^e_k$, it is easy to see that $\EXP[x_k | \mathcal{I}^e_k]$ minimizes the mean-square error at the event trigger. Moreover, from the definition of $\zeta_k$, $x_{k-\zeta_k}$ represents the freshest observation at the event trigger. Writing $x_k$ in terms of $x_{k-\zeta_k}$ and taking the conditional expectation with respect to $\mathcal{I}^e_k$, we obtain the result.
\end{proof}

\begin{lemma}\label{prop:2lemma}
The conditional mean $\EXP[x_k | \mathcal{I}^c_k]$ is the minimum mean-square-error estimator at the controller, and satisfies
\begin{align}\label{eq:est-at-cont}
\EXP[x_{k+1} | \mathcal{I}^c_{k+1}] = \left\{
  \begin{array}{l l}
     A^{\zeta_k+1} x_{k-\zeta_k} + \textstyle \sum_{t=0}^{\zeta_k} A^{t} B u_{k-t}, & \ \text{if} \ \delta_{k} = 1, \\[1.75\jot]
     A \EXP[x_{k} | \mathcal{I}^c_{k}] + B u_k + \imath_k, & \ \text{otherwise}
  \end{array} \right.
\end{align}
for $k \in \mathcal{K}$ with initial condition $\EXP[x_{0} | \mathcal{I}^c_{0}] = m_0$, where $\imath_k = A ( \EXP[ x_k | \mathcal{I}^c_k, \delta_k =0 ] - \EXP[ x_k | \mathcal{I}^c_k])$ is the signaling residual.
\end{lemma}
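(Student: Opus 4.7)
The plan is to first invoke the standard fact that for any square-integrable random vector the conditional mean is the minimizer of the mean-square error, which disposes of the MMSE claim immediately. The substantive content is the recursion in (\ref{eq:est-at-cont}), which I would establish by splitting on the value of $\delta_k$, since this variable dictates whether the controller's information set $\mathcal{I}^c_{k+1}$ gains a freshly received observation relative to $\mathcal{I}^c_k$ or merely augments it by the decisions $(\delta_k,u_k)$.

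For the case $\delta_k = 1$, I would first observe that, because the channel has a one-step delay and obsolete observations are discarded, the observation received at time $k+1$ is precisely $x_{k-\zeta_k}$, the freshest informative observation at the event trigger at time $k$, and it supersedes every previously received observation for purposes of predicting $x_{k+1}$. Iterating (\ref{c1:eq:sys}) backward from $k+1$ to $k-\zeta_k$ gives
\begin{align*}
x_{k+1} = A^{\zeta_k+1} x_{k-\zeta_k} + \sum_{t=0}^{\zeta_k} A^{t} B u_{k-t} + \sum_{t=0}^{\zeta_k} A^{t} w_{k-t}.
\end{align*}
Conditioning on $\mathcal{I}^c_{k+1}$, the first two terms are measurable with respect to that set (the controls $u_{k-t}$ for $t=0,\dots,\zeta_k$ are already known to the controller), while the noise sum has zero conditional mean by mutual independence of $\{w_s\}$, $x_0$, and $\{\tau_s\}$, which yields the upper branch of (\ref{eq:est-at-cont}).

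For the case $\delta_k=0$, no fresh observation arrives, and $\mathcal{I}^c_{k+1}$ augments $\mathcal{I}^c_k$ only by $\delta_k=0$ and $u_k$. Applying (\ref{c1:eq:sys}) and linearity gives
\begin{align*}
\EXP[x_{k+1} \mid \mathcal{I}^c_{k+1}] = A\,\EXP[x_k \mid \mathcal{I}^c_k,\delta_k=0] + B u_k,
\end{align*}
since $u_k$ is a function of $\mathcal{I}^c_k$ and $w_k$ is independent of the information up to time $k$. Adding and subtracting $A\,\EXP[x_k \mid \mathcal{I}^c_k]$ and invoking the definition of $\imath_k$ recovers the lower branch. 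The initial condition $\EXP[x_0 \mid \mathcal{I}^c_0] = m_0$ is immediate from the prior on $x_0$.

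The main obstacle I anticipate is conceptual rather than computational: the event $\{\delta_k=0\}$ is generally \emph{not} independent of $x_k$ because the triggering policy may depend on the state through $\mathcal{I}^e_k$, which is exactly why the signaling residual $\imath_k$ does not vanish. The lemma handles this by isolating the entire signaling contribution into $\imath_k$, so that the recursion only requires the algebraic relation between $\EXP[x_k \mid \mathcal{I}^c_k,\delta_k=0]$ and $\EXP[x_k \mid \mathcal{I}^c_k]$ that follows directly from the definition of $\imath_k$; the actual quantification of $\imath_k$ is deferred and not needed for the statement.
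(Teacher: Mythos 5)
Your proposal is correct and follows essentially the same route as the paper's proof: the MMSE claim is immediate, the $\delta_k=1$ branch is obtained by unrolling (\ref{c1:eq:sys}) back to $x_{k-\zeta_k}$ and killing the noise terms $w_{k-\zeta_k},\dots,w_k$ under conditioning on $\mathcal{I}^c_{k+1}$, and the $\delta_k=0$ branch isolates the signaling effect into $\imath_k$. The only (harmless) difference is in the second branch: the paper also works through the $\zeta_k$-step unrolling, writing $\EXP[x_{k-\zeta_k}\mid\mathcal{I}^c_k,\delta_k=0]=\EXP[x_{k-\zeta_k}\mid\mathcal{I}^c_k]+\tilde{x}_k$ and propagating the residual forward to get $\imath_k=A^{\zeta_k+1}\tilde{x}_k$, whereas you apply the one-step dynamics $x_{k+1}=Ax_k+Bu_k+w_k$ directly and add and subtract $A\,\EXP[x_k\mid\mathcal{I}^c_k]$, which is slightly more direct and equally valid; your closing remark correctly identifies why $\imath_k$ cannot be dropped at this stage.
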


\begin{proof}
Given $\mathcal{I}^c_k$, it is easy to see that $\EXP[x_k | \mathcal{I}^c_k]$ minimizes the mean-square error at the controller. In addition, writing $x_{k+1}$ in terms of $x_{k - \zeta_k}$ and taking the conditional expectation with respect to $\mathcal{I}^c_{k+1}$, we can write
\begin{align}\label{eq:expecation-prop2}
	\EXP[x_{k+1} | \mathcal{I}^c_{k+1}] = A^{\zeta_k+1} \EXP[x_{k-\zeta_k}| \mathcal{I}^c_{k+1}] + \textstyle \sum_{t=0}^{\zeta_k} A^{t} B u_{k-t}
\end{align}
where we used the fact that $\EXP[w_{k-t} | \mathcal{I}^c_{k+1}] = 0$ for all $t \in [0,\zeta_k]$ since the freshest observation that the controller might receive at time $k+1$ is $x_{k-\zeta_k}$. Now, note that if $\delta_k = 1$, the controller receives $x_{k-\zeta_k}$ at time $k+1$. In this case, $\EXP[x_{k-\zeta_k}| \mathcal{I}^c_{k+1}] = x_{k-\zeta_k}$. Inserting this into (\ref{eq:expecation-prop2}), we obtain
\begin{align*}
	\EXP[x_{k+1} | \mathcal{I}^c_{k+1}] = A^{\zeta_k+1} x_{k-\zeta_k} + \textstyle \sum_{t=0}^{\zeta_k} A^{t} B u_{k-t}.
\end{align*}
However, if $\delta_k = 0$, the controller receives nothing at time $k+1$. In this case, $\EXP[ x_{k-\zeta_k} | \mathcal{I}^c_{k+1}] = \EXP[ x_{k-\zeta_k} | \mathcal{I}^c_{k}, \delta_k=0]$. Hence, we can write $\EXP[x_{k-\zeta_k} | \mathcal{I}_{k}^c, \delta_k =0] = \EXP[x_{k-\zeta_k}| \mathcal{I}^c_k] + \tilde{x}_k$ for an appropriate residual $\tilde{x}_k$. Inserting this into (\ref{eq:expecation-prop2}), we obtain
\begin{align*}
	\EXP[x_{k+1} | \mathcal{I}^c_{k+1}] &= A^{\zeta_k+1} \EXP[x_{k-\zeta_k}| \mathcal{I}^c_{k}] + \textstyle \sum_{t=0}^{\zeta_k} A^{t} B u_{k-t} +  A^{\zeta_k+1} \tilde{x}_k\\[3\jot]
	&= A \EXP[ x_k | \mathcal{I}^c_k] + B u_k + \imath_k
\end{align*}
where we used the definition of $\EXP[ x_k | \mathcal{I}^c_k]$, and introduced $\imath_k =  A^{\zeta_k+1} \tilde{x}_k$. Lastly, we can write $\imath_k$ as
\begin{align*}
	\imath_k &= A^{\zeta_k+1} \big( \EXP[x_{k-\zeta_k} | \mathcal{I}_{k}^c, \delta_k =0] - \EXP[x_{k-\zeta_k}| \mathcal{I}^c_k] \big)
	\\[2.75\jot]
	&= A \big( \EXP[ x_k | \mathcal{I}^c_k, \delta_k =0 ] - \EXP[ x_k | \mathcal{I}^c_k] \big).
\end{align*}
This completes the proof.
\end{proof}

Now, we can express the control inputs based on the certainty equivalence principle\index{principle! certainty equivalence} as discussed. This leads to the employment of the linear-quadratic regulator\index{regulator! linear-quadratic} with a state estimate at the controller without the signaling residual. More specifically, we have $u_k = - L_k \hat{x}_k$, where $L_k = (B^T S_{k+1} B + R)^{-1} B^T S_{k+1} A$ is the linear-quadratic regulator gain, $\hat{x}_k$ is a state estimate at the controller satisfying (\ref{eq:est-at-cont}) with $\imath_k = 0$ for all $k \in \mathcal{K}$, and $S_k \succeq 0$ is a matrix that satisfies the algebraic Riccati equation\index{equation! algebraic Riccati}
\begin{equation}\label{eq:riccati}
\begin{aligned}
S_k &= Q + A^T S_{k+1} A - A^T S_{k+1} B (B^T S_{k+1} B + R)^{-1} B^T S_{k+1} A
\end{aligned}
\end{equation} 
for $k \in \mathcal{K}$ with initial condition $S_{N+1} = Q$.

We continue our analysis by presenting four lemmas that provide some properties associated with the estimation error\index{estimation error} $e_k = x_k - \hat{x}_k$ and the estimation mismatch\index{estimation mismatch} $\tilde{e}_k = \check{x}_k - \hat{x}_k$, where $\check{x}_k$ is the minimum mean-square-error state estimate at the event trigger.

\begin{lemma}\label{lemma1}
The estimation error $e_k$ satisfies
\begin{align}\label{eq:est-error-at-cont}
e_{k+1} = (1-\delta_k) (A e_k + w_k) + \delta_k \textstyle \sum_{t=0}^{\zeta_k} A^t w_{k-t}
\end{align}
for $k \in \mathcal{K}$ with initial condition $e_0 = x_0 - m_0$.
\end{lemma}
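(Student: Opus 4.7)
The plan is to split into two cases based on the value of $\delta_k$ and verify the claimed identity in each case by combining the system dynamics in (\ref{c1:eq:sys}) with the controller-side recursion in (\ref{eq:est-at-cont}), specialized to the certainty-equivalent design in which $\imath_k = 0$. When $\delta_k = 0$, the controller estimate propagates as $\hat{x}_{k+1} = A \hat{x}_k + B u_k$, so subtracting this from $x_{k+1} = A x_k + B u_k + w_k$ immediately gives $e_{k+1} = A e_k + w_k$, which is exactly the $\delta_k = 0$ branch of the claim.

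When $\delta_k = 1$, the estimate jumps to $\hat{x}_{k+1} = A^{\zeta_k+1} x_{k-\zeta_k} + \sum_{t=0}^{\zeta_k} A^{t} B u_{k-t}$. The key step here is to iterate (\ref{c1:eq:sys}) backward $\zeta_k + 1$ times so that the true state admits the representation
\[
x_{k+1} = A^{\zeta_k+1} x_{k-\zeta_k} + \textstyle\sum_{t=0}^{\zeta_k} A^{t} B u_{k-t} + \textstyle\sum_{t=0}^{\zeta_k} A^{t} w_{k-t}.
\]
Subtracting $\hat{x}_{k+1}$ then cancels the deterministic terms involving $x_{k-\zeta_k}$ and the past control inputs, leaving $e_{k+1} = \sum_{t=0}^{\zeta_k} A^{t} w_{k-t}$, which matches the second branch.

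Combining the two cases with the indicators $(1-\delta_k)$ and $\delta_k$ yields the single formula in (\ref{eq:est-error-at-cont}), and the initial condition $e_0 = x_0 - m_0$ follows at once from $\hat{x}_0 = m_0$. The argument is essentially a book-keeping exercise, and I do not expect any serious analytical obstacle; the only point that needs a moment of care is that $\zeta_k$ is a random index, so the backward-iteration length varies with the sample path. This is handled by performing the unrolling of (\ref{c1:eq:sys}) deterministically on each realization, using the value of $\zeta_k$ observed on that realization, which is exactly the sense in which (\ref{eq:est-error-at-cont}) is claimed to hold.
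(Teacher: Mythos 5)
Your proof is correct and follows the same route as the paper, which simply subtracts the estimator recursion (\ref{eq:est-at-cont}) with $\imath_k = 0$ from the state dynamics (\ref{c1:eq:sys}); you merely spell out the two branches and the backward unrolling that the paper leaves implicit. No issues.
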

\begin{proof}
To prove, we need to subtract (\ref{eq:est-at-cont}) from (\ref{c1:eq:sys}) given $\imath_k = 0$ for all $k \in \mathcal{K}$.
\end{proof}

\begin{lemma}\label{lemma2}
The following facts are true:
\begin{align}
	\EXP[ e_{k+1} | \mathcal{I}^e_k] &= (1-\delta_k) A \tilde{e}_k\\[1.75\jot]
	\Cov[ e_{k+1} | \mathcal{I}^e_k] &= \textstyle \sum_{t=0}^{\zeta_k} A^t W {A^t}^T.
\end{align}	
\end{lemma}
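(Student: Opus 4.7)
The plan is to start from the recursion
\[
e_{k+1} = (1-\delta_k)(A e_k + w_k) + \delta_k \sum_{t=0}^{\zeta_k} A^{t} w_{k-t}
\]
given by Lemma \ref{lemma1}, and compute the conditional mean and the conditional covariance given $\mathcal{I}^e_k$ separately on the events $\{\delta_k=1\}$ and $\{\delta_k=0\}$. Since $\delta_k$ and $\zeta_k$ are $\mathcal{I}^e_k$-measurable, they can be pulled outside the expectation, and $\hat{x}_k$ is also $\mathcal{I}^e_k$-measurable (as $\mathcal{I}^c_k\subseteq \mathcal{I}^e_k$), so $\tilde e_k=\check x_k-\hat x_k$ is likewise known given $\mathcal{I}^e_k$.

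The crucial observation is that the noises $w_{k-t}$ for $t\in[0,\zeta_k]$ are all independent of $\mathcal{I}^e_k$. This follows because the freshest observation available at the event trigger is $x_{k-\zeta_k}$, and the noises $w_{k-\zeta_k},\dots,w_k$ affect only states generated after $x_{k-\zeta_k}$; none of them enters the past decisions $\delta_s,u_s$ for $s<k$ or any element of $\mathcal{I}^e_k$ in a way that breaks independence. Consequently $\EXP[w_{k-t}\mid\mathcal{I}^e_k]=0$ and the $w_{k-t}$ are mutually independent with covariance $W$ conditional on $\mathcal{I}^e_k$.

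With these two facts, the case $\delta_k=1$ is immediate: $\EXP[e_{k+1}\mid\mathcal{I}^e_k]=0$ and $\Cov[e_{k+1}\mid\mathcal{I}^e_k]=\sum_{t=0}^{\zeta_k}A^{t} W (A^{t})^T$. For the case $\delta_k=0$ one needs the decomposition
\[
e_k = \tilde e_k + \sum_{t=1}^{\zeta_k} A^{t-1} w_{k-t},
\]
obtained by subtracting $\hat x_k$ from the identity $x_k = A^{\zeta_k} x_{k-\zeta_k}+\sum_{t=1}^{\zeta_k}A^{t-1}(Bu_{k-t}+w_{k-t})$ and inserting Lemma \ref{prop:1lemma} for $\check x_k$. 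Substituting this into $Ae_k+w_k$ yields $Ae_k+w_k = A\tilde e_k + \sum_{t=0}^{\zeta_k} A^{t} w_{k-t}$, from which both formulas follow after taking conditional mean and covariance and using the independence noted above.

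The one subtlety I expect to be the main obstacle is cleanly justifying the independence of $w_{k-\zeta_k},\dots,w_k$ from $\mathcal{I}^e_k$, because $\zeta_k$ is itself random and measurable with respect to $\mathcal{I}^e_k$. I would handle this by conditioning on a fixed value of $\zeta_k=j$, using the primitive independence of $\{w_i\}_{i\geq k-j}$ from $(x_0,\{w_i\}_{i<k-j},\{\tau_i\}_{i\leq k})$ and hence from the generated observations and past triggering decisions up to time $k$, and then re-randomizing $\zeta_k$. Once that measurability issue is settled, combining the two cases via the indicator $\delta_k$ yields the two claimed identities simultaneously.
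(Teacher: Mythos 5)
Your proof is correct and follows essentially the same route as the paper: both start from the recursion of Lemma~\ref{lemma1}, split on $\delta_k$, and use that the noises $w_{k-\zeta_k},\dots,w_k$ are zero-mean and independent of $\mathcal{I}^e_k$ while $\tilde e_k$ and $\zeta_k$ are $\mathcal{I}^e_k$-measurable. Your explicit decomposition $e_k=\tilde e_k+\sum_{t=1}^{\zeta_k}A^{t-1}w_{k-t}$ is just an unpacked version of the paper's use of $\EXP[e_k\mid\mathcal{I}^e_k]=\tilde e_k$ and $\Cov[e_k\mid\mathcal{I}^e_k]=\Cov[x_k\mid\mathcal{I}^e_k]$, and your extra care in conditioning on $\zeta_k=j$ to justify the independence is a legitimate tightening of a step the paper leaves implicit.
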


\begin{proof}
By Lemma~\ref{lemma1}, when $\delta_k = 1$, we obtain
\begin{align*}
	\EXP[ e_{k+1} | \mathcal{I}^e_k] &= 0\\[1.75\jot]
	\Cov[ e_{k+1} | \mathcal{I}^e_k] &= \textstyle \sum_{t=0}^{\zeta_k} A^t W {A^t}^T.
\end{align*}
However, when $\delta_k =0$, we obtain
\begin{align*}
	\EXP[ e_{k+1} | \mathcal{I}^e_k] &= A \EXP[ e_k | \mathcal{I}^e_k] = A \tilde{e}_k\\[1.75\jot]
	\Cov[ e_{k+1} | \mathcal{I}^e_k] &= A \Cov[e_k | \mathcal{I}^e_k] A^T + W = \textstyle \sum_{t=0}^{\zeta_k} A^t W {A^t}^T
\end{align*}
where we used the fact that $\Cov[e_k | \mathcal{I}^e_k] = \Cov[ x_k | \mathcal{I}^e_k]$.
\end{proof}
\vspace{-4mm}
\begin{lemma}\label{lemma3}
The estimation mismatch $\tilde{e}_k$ satisfies
\begin{align}
	\tilde{e}_{k+1} = (1-\delta_k) A \tilde{e}_k + \textstyle \sum_{t = \zeta_{k+1}}^{\zeta_k} A^{t} w_{k-t}
\end{align}
for $k \in \mathcal{K}$ with initial condition $\tilde{e}_0 = x_0 - m_0$.
\end{lemma}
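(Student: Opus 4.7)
The plan is to compute $\tilde{e}_{k+1} = \check{x}_{k+1} - \hat{x}_{k+1}$ directly from the explicit formulas established in Lemma~\ref{prop:1lemma} and Lemma~\ref{prop:2lemma}, and to split into the two cases $\delta_k = 1$ and $\delta_k = 0$ that arise in the recursion for $\hat{x}_{k+1}$.

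First I would establish a key auxiliary identity. Iterating the dynamics (\ref{c1:eq:sys}) backward from $x_{k+1}$ once to $x_{k-\zeta_k}$ (that is, $\zeta_k+1$ steps) and once to $x_{k+1-\zeta_{k+1}}$ (that is, $\zeta_{k+1}$ steps), and subtracting, yields
\begin{align*}
A^{\zeta_{k+1}} x_{k+1-\zeta_{k+1}} - A^{\zeta_k+1} x_{k-\zeta_k} = \textstyle\sum_{t = \zeta_{k+1}}^{\zeta_k} A^{t}\bigl( B u_{k-t} + w_{k-t} \bigr),
\end{align*}
which is meaningful because $\zeta_{k+1} \in [0,\zeta_k+1]$ by the update rule for $\zeta_k$ (and the sum is empty when $\zeta_{k+1} = \zeta_k + 1$). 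This identity isolates exactly the noise terms that the event trigger ``sees'' between the two freshest observations, and it is the heart of the proof.

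Next, for the case $\delta_k = 1$, Lemma~\ref{prop:1lemma} gives $\check{x}_{k+1} = A^{\zeta_{k+1}} x_{k+1-\zeta_{k+1}} + \sum_{t=1}^{\zeta_{k+1}} A^{t-1} B u_{k+1-t}$, while Lemma~\ref{prop:2lemma} with $\imath_k = 0$ gives $\hat{x}_{k+1} = A^{\zeta_k+1} x_{k-\zeta_k} + \sum_{t=0}^{\zeta_k} A^{t} B u_{k-t}$. Subtracting and applying the identity above, the control-input terms telescope and cancel after re-indexing $s = t-1$, leaving $\tilde{e}_{k+1} = \sum_{t=\zeta_{k+1}}^{\zeta_k} A^t w_{k-t}$, which matches the claimed formula because $(1-\delta_k) A \tilde{e}_k = 0$.

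For the case $\delta_k = 0$, I would instead use $\hat{x}_{k+1} = A\hat{x}_k + B u_k$ and combine it with Lemma~\ref{prop:1lemma} applied at times $k+1$ and $k$. Computing $\check{x}_{k+1} - A\check{x}_k - B u_k$ and again invoking the auxiliary identity (and the same re-indexing of control terms) gives $\check{x}_{k+1} = A\check{x}_k + Bu_k + \sum_{t=\zeta_{k+1}}^{\zeta_k} A^t w_{k-t}$, so subtracting $\hat{x}_{k+1}$ yields $\tilde{e}_{k+1} = A\tilde{e}_k + \sum_{t=\zeta_{k+1}}^{\zeta_k} A^t w_{k-t}$, which is the claimed formula in this case. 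The initial condition follows from $\zeta_0 = 0$, $\check{x}_0 = x_0$, and $\hat{x}_0 = m_0$. The main obstacle is bookkeeping: carefully tracking the two summation ranges in $\check{x}_{k+1}$ versus $\hat{x}_{k+1}$ (the former starting at index $1$, the latter at $0$) and confirming that after the index shift the control terms cancel exactly against the corresponding $Bu_{k-t}$ contributions in the auxiliary identity.
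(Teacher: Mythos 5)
Your proposal is correct, and the bookkeeping you flag as the main obstacle does go through: the auxiliary identity
\begin{align*}
A^{\zeta_{k+1}} x_{k+1-\zeta_{k+1}} - A^{\zeta_k+1} x_{k-\zeta_k} = \textstyle\sum_{t = \zeta_{k+1}}^{\zeta_k} A^{t}\bigl( B u_{k-t} + w_{k-t} \bigr)
\end{align*}
follows from subtracting the two backward expansions of $x_{k+1}$, it degenerates correctly to $0=0$ when $\zeta_{k+1}=\zeta_k+1$, and the control terms cancel exactly as you describe in both branches. Your route differs from the paper's in a genuine, if modest, way. The paper first relates $x_{k-\zeta_k}$ to $x_{k-\eta_k}$ at a \emph{fixed} time $k$ to obtain the closed-form expansion $\tilde{e}_k = \sum_{t=\zeta_k+1}^{\eta_k} A^{t-1} w_{k-t}$, and then derives the recursion by differencing this expansion at times $k$ and $k+1$, with the case split driven by the update rule for the controller's age $\eta_{k+1}$. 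You instead never expand anything relative to $x_{k-\eta_k}$: your pivot is the one-step relation between the trigger's two anchors $x_{k-\zeta_k}$ and $x_{k+1-\zeta_{k+1}}$, and in the $\delta_k=0$ branch the term $A\tilde{e}_k$ emerges directly from the controller recursion $\hat{x}_{k+1}=A\hat{x}_k+Bu_k$ rather than from manipulating noise sums. The paper's approach buys the explicit formula $\tilde{e}_k = \sum_{t=\zeta_k+1}^{\eta_k} A^{t-1} w_{k-t}$ as a byproduct, which it quotes again in the discussion after Theorem~\ref{thm:1}; your approach is slightly leaner for proving the recursion itself but does not produce that closed form, so if you wanted it you would have to unroll your own recursion afterwards.
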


\begin{proof}
By Lemmas~\ref{prop:1lemma} and \ref{prop:2lemma} and from the definitions of $\zeta_k$ and $\eta_k$, we can write
\begin{align}
	\check{x}_k = A^{\zeta_k} x_{k-\zeta_k} + \textstyle \sum_{t=1}^{\zeta_k} A^{t-1} B u_{k-t}\label{lemma5-1}\\[1.75\jot]
	\hat{x}_k = A^{\eta_k} x_{k-\eta_k} + \textstyle \sum_{t=1}^{\eta_k} A^{t-1} B u_{k-t}. \label{lemma5-3}
\end{align}
Since $\zeta_k \leq \eta_k$, we can write $x_{k-\zeta_k}$ in terms of $x_{k - \eta_k}$ as
\begin{align}\label{lemma5-2}
	x_{k-\zeta_k} = A^{\eta_k - \zeta_k} x_{k - \eta_k} + \textstyle \sum_{t = 1}^{\eta_k - \zeta_k} (A^{t-1} B u_{k-\zeta_k -t} + A^{t-1} w_{k - \zeta_k - t}).
\end{align}
Inserting (\ref{lemma5-2}) in (\ref{lemma5-1}), we get
\begin{align*}
	\check{x}_k &= A^{\eta_k} x_{k-\eta_k} + \textstyle \sum_{t=1}^{\zeta_k} A^{t-1} B u_{k-t}\\[1.75\jot]
	& \qquad \qquad \qquad + \textstyle \sum_{t = 1}^{\eta_k - \zeta_k} (A^{t + \zeta_k -1} B u_{k-\zeta_k -t} + A^{t + \zeta_k -1} w_{k - \zeta_k - t})\\[1.75\jot]
	&= A^{\eta_k} x_{k-\eta_k} + \textstyle \sum_{t=1}^{\eta_k} A^{t-1} B u_{k-t} + \textstyle \sum_{t = \zeta_k + 1}^{\eta_k} A^{t -1} w_{k - t}.
\end{align*}
Now, using (\ref{lemma5-3}), we find
\begin{align}
	\tilde{e}_k &= \textstyle \sum_{t = \zeta_k + 1}^{\eta_k} A^{t -1} w_{k - t}\label{lemma5-ek}\\[1.75\jot]
	\tilde{e}_{k+1} &= \textstyle \sum_{t=\zeta_{k+1} +1}^{\eta_{k+1}} A^{t-1} w_{k+1-t}.
\end{align}
From (\ref{age-cont}), we know that the dynamics of $\eta_{k+1}$ depends on $\delta_k$. In particular, when $\delta_k = 1$, we have $\eta_{k+1} = \zeta_{k} + 1$, and can write
\begin{align}\label{lemma5-delta1}
	\tilde{e}_{k+1} = A^{\zeta_{k+1}} w_{k-\zeta_{k+1}} + \dots + A^{\zeta_{k}} w_{k-\zeta_k}.
\end{align}	
However, when $\delta_k = 0$, we have $\eta_{k+1} = \eta_{k} + 1$, and can write
\begin{align}\label{lemma5-delta0}
	\tilde{e}_{k+1} = A^{\zeta_{k+1}} w_{k-\zeta_{k+1}} + \dots + A^{\eta_{k}} w_{k-\eta_k}.
\end{align}
Hence, putting together (\ref{lemma5-delta1}) and (\ref{lemma5-delta0}), we get
\begin{align*}
	\tilde{e}_{k+1} &= (1-\delta_k) (A^{\zeta_{k}+1} w_{k-\zeta_k-1} + \dots + A^{\eta_{k}} w_{k-\eta_k}) \\[1.75\jot]
	&\qquad \qquad \qquad + A^{\zeta_{k+1}} w_{k-\zeta_{k+1}} + \dots + A^{\zeta_{k}} w_{k-\zeta_k}\\[1.75\jot]
	&= (1-\delta_k) A \tilde{e}_k + A^{\zeta_{k+1}} w_{k-\zeta_{k+1}} + \dots + A^{\zeta_{k}} w_{k-\zeta_k}
\end{align*}
where in the second equality we used (\ref{lemma5-ek}) and the fact that $\zeta_k \leq \eta_k$. This completes the proof.
\end{proof}

\begin{lemma}\label{lemma4}
Let $f(\tilde{e}_{k+1}):\mathbb{R}^n \to \mathbb{R}$ be a symmetric function of $\tilde{e}_{k+1}$. Then, $g(\tilde{e}_k,\delta_k) = \EXP[f(\tilde{e}_{k+1}) | \tilde{e}_k, \delta_k]:\mathbb{R}^n \times \{0,1\} \to \mathbb{R}$ is a symmetric function of $\tilde{e}_k$.
\end{lemma}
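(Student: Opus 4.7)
The plan is to apply the recursion from Lemma~\ref{lemma3}, namely
\begin{align*}
    \tilde{e}_{k+1} = (1-\delta_k) A \tilde{e}_k + V_k, \qquad V_k := \textstyle\sum_{t=\zeta_{k+1}}^{\zeta_k} A^{t} w_{k-t},
\end{align*}
and exploit the symmetry of the driving Gaussian noise. This reduces the claim to showing that the conditional law of $V_k$ given $(\tilde{e}_k, \delta_k)$ is symmetric about the origin and is insensitive to the sign of $\tilde{e}_k$.

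First I would isolate the noise block that determines $V_k$. Conditionally on $\zeta_k$, $V_k$ is a function only of $(w_{k-\zeta_k}, \ldots, w_k)$ and the delay $\tau_{k+1}$, which are jointly independent of the noises $(w_0, \ldots, w_{k-\zeta_k-1})$ and the past delays $(\tau_0, \ldots, \tau_k)$ on which $\tilde{e}_k$ (by equation (\ref{lemma5-ek})) and the information set $\mathcal{I}^e_k$ (hence $\delta_k$) depend. Because each $w_t$ is zero-mean Gaussian, and the coefficients $\{A^t\}$ together with the random limit $\zeta_{k+1}$ are independent of the signs of the $w_t$, the variable $V_k$ is conditionally symmetric about $0$, and its conditional law enters the expectation only through $\zeta_k$, not through the actual value of $\tilde{e}_k$.

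The conclusion would then follow from a direct substitution:
\begin{align*}
    g(-\tilde{e}_k, \delta_k) &= \EXP[f(-(1-\delta_k) A \tilde{e}_k + V_k) \mid \tilde{e}_k, \delta_k] \\
    &= \EXP[f(-(1-\delta_k) A \tilde{e}_k - V_k) \mid \tilde{e}_k, \delta_k] \\
    &= \EXP[f((1-\delta_k) A \tilde{e}_k + V_k) \mid \tilde{e}_k, \delta_k] = g(\tilde{e}_k, \delta_k),
\end{align*}
where the second equality uses the symmetry of the conditional law of $V_k$ about the origin, and the third uses the assumed symmetry of $f$.

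The main obstacle is the block-independence argument in the first step: the summation range of $V_k$ involves the two random variables $\zeta_k$ and $\zeta_{k+1}$, so one has to condition further on $\zeta_k$ before invoking independence, verify symmetry for each fixed value of $\zeta_k$, and then average back out, checking that the mixture over $\zeta_k$ does not reintroduce any asymmetric dependence on $\tilde{e}_k$. Once this accounting is carried out carefully, the Gaussian-symmetry argument goes through and the lemma follows.
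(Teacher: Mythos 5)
Your proposal is correct and follows essentially the same route as the paper's proof: write $\tilde{e}_{k+1} = (1-\delta_k)A\tilde{e}_k + n_k$ via Lemma~\ref{lemma3}, then combine the sign-symmetry of the zero-mean Gaussian noise term with the assumed symmetry of $f$ (you apply the two symmetries in the opposite order, which changes nothing). Your additional care in conditioning on $\zeta_k$ and $\zeta_{k+1}$ and arguing block-independence of $(w_{k-\zeta_k},\dots,w_k)$ from $\tilde{e}_k$ and $\mathcal{I}^e_k$ is a welcome tightening of a point the paper's proof treats implicitly by fixing the covariance $N_k$.
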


\begin{proof}
By Lemma~\ref{lemma3}, we can write
\begin{align}\label{eq:e-tilde}
	\tilde{e}_{k+1} = (1-\delta_k) A \tilde{e}_k + n_k
\end{align}
where $n_k = \textstyle \sum_{t = \zeta_{k+1}}^{\zeta_k} A^{t} w_{k-t}$ is a Gaussian white noise with zero mean and covariance $N_k = \textstyle \sum_{t = \zeta_{k+1}}^{\zeta_k} A^{t} W {A^t}^T$. Define the variable $\bar{n}_k$ as $\bar{n}_k = - n_k$. Then, $\bar{n}_k$ is also a Gaussian variable with zero mean and covariance $N_k$. Therefore, we have
\begin{align*}
	g(\tilde{e}_k,\delta_k) &= \EXP \Big[ f( \tilde{e}_{k+1}) \big| \tilde{e}_k, \delta_k \Big]\\[1\jot]
	&= \EXP \Big[f \big( (1-\delta_k) A \tilde{e}_k + n_k \big) \big| \tilde{e}_k, \delta_k \Big]\\[1\jot]
	&= \EXP \Big[f \big(-(1-\delta_k) A \tilde{e}_k - n_k \big) \big| \tilde{e}_k, \delta_k \Big]\\[1\jot]
	&= \textstyle \int_{\mathbb{R}^n} f \big(-(1-\delta_k) A \tilde{e}_k - n_k \big) c \exp(-\frac{1}{2} n_k^T N_k^{-1} n_k) \ d n_k\\[1.75\jot]
	&= \textstyle \int_{\mathbb{R}^n} f \big(-(1-\delta_k) A \tilde{e}_k + \bar{n}_k \big) c \exp(-\frac{1}{2} \bar{n}_k^T N_k^{-1} \bar{n}_k) \ d \bar{n}_k\\[1\jot]
	&= \EXP \Big[f \big(-(1-\delta_k) A \tilde{e}_k + n_k \big) \big| \tilde{e}_k, \delta_k \Big]\\[1\jot]
	&= g(-\tilde{e}_k,\delta_k)
\end{align*}
where $c$ is a constant, the first equality comes from (\ref{eq:e-tilde}), and the third equality comes from the hypothesis assumption. This proves the claim.
\end{proof}

Note that, given the state estimate used in the structure of the controller, we proved in Lemmas~\ref{lemma1} and \ref{lemma3} that the estimation error and the estimation mismatch satisfy linear recursive equations\index{equation! linear recursive}, which were then used in the derivations of Lemmas~\ref{lemma2} and \ref{lemma4}, respectively. These linear recursive equations, as we will show, lead to a tractable analysis for the characterization of the optimal triggering policy.

In the next lemma, we introduce a loss function that is equivalent to the original loss function, in the sense that optimizing it is equivalent to optimizing the original loss function.

\begin{lemma}\label{lemma5}
Given the adopted certainty-equivalent controller, the following loss function is equivalent to the original loss function $\Phi(\pi)$:
\begin{align}\label{eq:phiprime}
	\Psi(\pi) = \EXP \Big[ \textstyle \sum_{k=0}^{N} \theta \delta_k +  \big \| e_k \big\|^2_{\Gamma_k} \Big]
\end{align}
where $\theta = \ell (1-\lambda)/\lambda$ is a weighting coefficient and $\Gamma_k = A^T S_{k+1} B (B^T S_{k+1} B + R)^{-1} B^T S_{k+1} A$ for $k \in \mathcal{K}$ is a weighting matrix.
\end{lemma}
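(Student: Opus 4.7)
The plan is to turn $\Phi$ into $\Psi$ by a classical LQG completion-of-squares trick, invoking the Riccati equation (\ref{eq:riccati}) to convert the regulation cost into a telescoping sum plus a residual in $\|e_k\|^2_{\Gamma_k}$. The triggering cost $\sum_k \ell \delta_k$ is left untouched, and the relative factor $(1-\lambda)/\lambda$ in $\Phi$ is precisely what produces the coefficient $\theta = \ell(1-\lambda)/\lambda$ after rescaling by $(N+1)/\lambda$. The algebraic core, which I would establish first, is to write $M_k = B^T S_{k+1} B + R$ and recast (\ref{eq:riccati}) as $S_k - Q = A^T S_{k+1} A - \Gamma_k$, and then to complete the square in $u$:
\[
	\|u\|^2_R + (A x + B u)^T S_{k+1} (A x + B u) = \|u + L_k x\|^2_{M_k} + x^T (S_k - Q) x,
\]
for all $x \in \mathbb{R}^n$ and $u \in \mathbb{R}^m$.

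Next, I would condition on $(x_k, u_k)$ and use the independence and zero mean of $w_k$ with covariance $W$ to get $\EXP[x_{k+1}^T S_{k+1} x_{k+1} \,|\, x_k, u_k] = (Ax_k + Bu_k)^T S_{k+1}(Ax_k + Bu_k) + \tr(S_{k+1} W)$. Combining this with the identity above and the decomposition $S_{k+1} = Q + (S_{k+1} - Q)$ yields
\[
	\EXP\big[\|x_{k+1}\|^2_Q + \|u_k\|^2_R + x_{k+1}^T (S_{k+1} - Q) x_{k+1} \,\big|\, x_k, u_k\big] = \|u_k + L_k x_k\|^2_{M_k} + x_k^T (S_k - Q) x_k + \tr(S_{k+1} W).
\]
Taking total expectation, summing over $k \in \mathcal{K}$, and telescoping the terms $\EXP[x_k^T(S_k - Q) x_k] - \EXP[x_{k+1}^T(S_{k+1} - Q)x_{k+1}]$ (with the boundary term at $k = N+1$ vanishing because $S_{N+1} = Q$) produces
\[
	\sum_{k=0}^{N} \EXP\big[\|x_{k+1}\|^2_Q + \|u_k\|^2_R\big] = \sum_{k=0}^{N} \EXP\big[\|u_k + L_k x_k\|^2_{M_k}\big] + \EXP\big[x_0^T (S_0 - Q) x_0\big] + \sum_{k=0}^{N} \tr(S_{k+1} W).
\]

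To finish, I would substitute the certainty-equivalent control $u_k = -L_k \hat{x}_k$, which makes $u_k + L_k x_k = L_k e_k$ and hence $\|u_k + L_k x_k\|^2_{M_k} = e_k^T L_k^T M_k L_k e_k = \|e_k\|^2_{\Gamma_k}$. Multiplying $\Phi(\pi)$ by $(N+1)/\lambda$ and collecting the resulting expressions then yields $\tfrac{N+1}{\lambda} \Phi(\pi) = \Psi(\pi) + \kappa$, where $\kappa = \EXP[x_0^T(S_0 - Q)x_0] + \sum_{k=0}^{N} \tr(S_{k+1} W)$ depends only on the system primitives and is independent of $\pi$. Because $(N+1)/\lambda > 0$, the minimizers of $\Phi$ and $\Psi$ coincide, which is the claimed equivalence. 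I expect the only subtle point to be pinning down the completion-of-squares identity and keeping the telescoping indices straight; notably the derivation uses only the independence and zero mean of $w_k$ and does not require any orthogonality between $e_k$ and $\hat{x}_k$, which matters since the $\hat{x}_k$ in (\ref{eq:est-at-cont}) is the signaling-free estimate obtained by setting $\imath_k = 0$ and need not coincide with the true conditional mean at the controller.
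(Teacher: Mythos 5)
Your proposal is correct and follows essentially the same route as the paper's proof, which sketches exactly the same three ingredients (the expansion of $x_{k+1}^T S_{k+1} x_{k+1}$, the Riccati identity $S_k - Q = A^T S_{k+1}A - L_k^T(B^T S_{k+1}B+R)L_k$, and the telescoping sum) and then appeals to ``a few algebraic operations.'' You have simply carried out those operations explicitly via completion of squares, and your closing observation---that no orthogonality between $e_k$ and $\hat{x}_k$ is needed, only the independence and zero mean of $w_k$---is accurate and worth noting given that $\hat{x}_k$ is the signaling-free estimate.
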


\begin{proof}
Given (\ref{c1:eq:sys}) and (\ref{eq:riccati}), we can derive the following identities:
\begin{align*}
\begin{split}
	&x_{k+1}^T S_{k+1} x_{k+1} = (A x_k + B u_k + w_k)^T S_{k+1} (A x_k + B u_k + w_k)
\end{split}\\[2.75\jot]
\begin{split}
	&x_k^T S_k x_k = x_k^T \big(Q + A^T S_{k+1} A - L_k^T (B^T S_{k+1} B + R) L_k\big) x_k
\end{split}\\[2.75\jot]
\begin{split}
	&x_{N+1}^T S_{N+1} x_{N+1} - x_0^T S_0 x_0 =  \textstyle \sum_{k=0}^{N} \big( x_{k+1}^T S_{k+1} x_{k+1} - x_k^T S_k x_k \big).
\end{split}
\end{align*}
Using the above identities together with $u_k = -L_k \hat{x}_k$, and applying few algebraic operations, we can obtain $\Psi(\pi)$ as in (\ref{eq:phiprime}), and can see that it is equivalent to $\Phi(\pi)$.
\end{proof}

Based on the loss function $\Psi(\pi)$, we can form the value function\index{value function} $V_k(\mathcal{I}^e_k)$ as
\begin{align}
	V_k(\mathcal{I}^e_k) = \min_{\pi \in \mathcal{P}}\EXP \Big[ \textstyle \sum_{t=k}^{N} \theta \delta_t +  \big \| e_{t+1} \big\|^2_{\Gamma_{t+1}} \big| \mathcal{I}^e_k \Big] \label{eq:Ve-def}
\end{align}
for $k \in \mathcal{K}$, where $\Gamma_{N+1} = 0$ by convention. Given $V_k(\mathcal{I}^e_k)$, we can formally define the value of information\index{value of information}, a metric that measures the difference between the benefit and the cost of the system at each time.

\begin{definition}\label{def:voi}
The value of information at time $k$ is the variation in the value function $V_k(\mathcal{I}^e_k)$ with respect to the information $x_{k-\zeta_k}$ that can be communicated to the controller, i.e.,
\begin{align}\label{eq:voi-def}
\voi_k = V_k(\mathcal{I}^e_k)|_{\delta_k = 0} - V_k(\mathcal{I}^e_k)|_{\delta_k = 1}
\end{align}
where $V_k(\mathcal{I}^e_k)|_{\delta_k}$ denotes $V_k(\mathcal{I}^e_k)$ when $\delta_k$ is enforced.
\end{definition}

We are now in a position to present our main results. We first have the following theorem in which we obtain the optimal triggering policy under the main information structure\index{information structure}, consisting of $\mathcal{I}^e_k$ and $\mathcal{I}^c_k$.

\begin{theorem}\label{thm:1}
The optimal triggering policy $\pi^{\star}$ under the main information structure is a symmetric threshold policy given by $\delta_{k}^{\star} = \mathds{1}_{\voi_k \geq 0}$, where $\voi_k$ is the value of information expressed as
\begin{align}\label{eq:voi-original}
	\voi_k &= \big\| \tilde{e}_k \big\|^2_{A^T \Gamma_{k+1} A} - \theta + \varrho_k(\tilde{e}_k)
\end{align}
and $\varrho_k(\tilde{e}_k) = \EXP[V_{k+1}(\mathcal{I}^e_{k+1})|\mathcal{I}^e_k, \delta_k = 0] - \EXP[V_{k+1}(\mathcal{I}^e_{k+1})|\mathcal{I}^e_k, \delta_k = 1]$.
\end{theorem}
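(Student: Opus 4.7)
The plan is to apply dynamic programming to the equivalent cost $\Psi(\pi)$ from Lemma~\ref{lemma5} and read off the threshold structure from a direct comparison of the two admissible actions at each stage. Concretely, I would start from the Bellman recursion
\begin{align*}
V_k(\mathcal{I}^e_k) = \min_{\delta_k \in \{0,1\}} \EXP\Big[\theta \delta_k + \|e_{k+1}\|^2_{\Gamma_{k+1}} + V_{k+1}(\mathcal{I}^e_{k+1}) \,\Big|\, \mathcal{I}^e_k, \delta_k\Big]
\end{align*}
with terminal condition $V_{N+1} \equiv 0$, then evaluate the inner conditional expectation by plugging in the error dynamics from Lemma~\ref{lemma1} and Lemma~\ref{lemma2}.

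Next, I would separate the one-step conditional cost into a deterministic squared mean and a noise trace. By Lemma~\ref{lemma2}, conditioned on $\mathcal{I}^e_k$ the variable $e_{k+1}$ has mean $(1-\delta_k) A \tilde{e}_k$ and covariance $\sum_{t=0}^{\zeta_k} A^t W {A^t}^T$, which is the same under both actions. Hence $\EXP[\|e_{k+1}\|^2_{\Gamma_{k+1}} | \mathcal{I}^e_k, \delta_k]$ equals $(1-\delta_k)\|\tilde{e}_k\|^2_{A^T\Gamma_{k+1}A}$ plus a trace term that cancels in the difference $V_k|_{\delta_k=0} - V_k|_{\delta_k=1}$. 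What survives is exactly $\|\tilde{e}_k\|^2_{A^T\Gamma_{k+1}A}$ from the squared mean, minus the action cost $\theta$ saved by not transmitting, plus the continuation gap $\varrho_k(\tilde{e}_k)$ by its definition. This yields the representation (\ref{eq:voi-original}). Since the Bellman minimizer selects $\delta_k=1$ precisely when the cost under transmission is no larger than under silence, we get $\delta_k^\star = \mathds{1}_{\voi_k \geq 0}$.

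For the symmetric threshold property I would argue by backward induction that $V_k(\mathcal{I}^e_k)$ can be written as a symmetric function of $\tilde{e}_k$ plus a term that does not depend on $\tilde{e}_k$. The base $V_{N+1}\equiv 0$ is trivial. For the inductive step, the running cost $\|e_{k+1}\|^2_{\Gamma_{k+1}}$ contributes, after conditioning on $\mathcal{I}^e_k$, a part equal to $(1-\delta_k)\|\tilde{e}_k\|^2_{A^T\Gamma_{k+1}A}$ which is symmetric in $\tilde{e}_k$; the continuation term $\EXP[V_{k+1}(\mathcal{I}^e_{k+1})|\mathcal{I}^e_k,\delta_k]$ is symmetric in $\tilde{e}_k$ by the inductive hypothesis combined with Lemma~\ref{lemma4}. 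Since the pointwise minimum of two symmetric functions remains symmetric, $V_k$ inherits the symmetry, which implies $\varrho_k(\tilde{e}_k) = \varrho_k(-\tilde{e}_k)$ and hence $\voi_k$ is a symmetric function of $\tilde{e}_k$. Therefore the triggering region $\{\tilde{e}_k : \voi_k \geq 0\}$ is symmetric about the origin.

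The main obstacle is making the inductive symmetry argument clean: one must verify that $V_{k+1}$ truly depends on $\mathcal{I}^e_{k+1}$ only through $\tilde{e}_{k+1}$ and through quantities (such as $\zeta_{k+1}$, the noise covariances, and $\Gamma_{\cdot}$) whose conditional law given $(\tilde{e}_k,\delta_k)$ is unchanged under a sign flip of $\tilde{e}_k$, so that Lemma~\ref{lemma4} is genuinely applicable. This requires checking that no asymmetry sneaks in through the randomness of the processing delay or through the recursion (\ref{eq:e-tilde}); because $n_k$ in that recursion is zero-mean Gaussian with covariance independent of $\tilde{e}_k$, and because the minimization is over the fixed binary set $\{0,1\}$, the induction goes through without obstruction.
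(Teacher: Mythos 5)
Your proposal is correct and follows essentially the same route as the paper's own proof: the Bellman recursion on the equivalent cost from Lemma~\ref{lemma5}, the decomposition of the one-step cost via Lemma~\ref{lemma2} into the squared-mean term $(1-\delta_k)\|\tilde{e}_k\|^2_{A^T\Gamma_{k+1}A}$ plus a $\delta_k$-independent trace that cancels in the action comparison, and the backward induction on the symmetry of $V_k$ in $\tilde{e}_k$ using Lemma~\ref{lemma4}. Your closing remark about verifying that $V_{k+1}$ depends on $\mathcal{I}^e_{k+1}$ only through $\tilde{e}_{k+1}$ and sign-flip-invariant quantities is a point the paper leaves implicit, but it does not change the argument.
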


\begin{proof}
From the additivity of $V_k(\mathcal{I}^e_k)$, we obtain
\begin{align*}
	V_k(\mathcal{I}^e_k) = \!\min_{\Prob(\delta_k|\mathcal{I}_k^e)} \! \EXP\Big[\theta \delta_k + e_{k+1}^T \Gamma_{k+1} e_{k+1} + V_{k+1}(\mathcal{I}^e_{k+1}) \big|\mathcal{I}^e_k\Big]
\end{align*}
with initial condition $V_{N+1}(\mathcal{I}^e_{N+1}) = 0$. We prove that $V_k(\mathcal{I}^e_k)$ is a symmetric function of $\tilde{e}_k$. We assume that the claim holds at time $k+1$, and shall prove that it also holds at time $k$. By Lemma~\ref{lemma2}, we find
\begin{align*}
	\EXP \Big[e_{k+1}^T &\Gamma_{k+1} e_{k+1} \big| \mathcal{I}^e_k \Big]= \underset{\delta_k}{\EXP} \Big[ (1- \delta_k) \tilde{e}_k^T A^T \Gamma_{k+1} A \tilde{e}_k + \tr(\Gamma_{k+1} \Sigma_k) \big| \mathcal{I}^e_k \Big]
\end{align*}
where $\Sigma_k = \sum_{t=0}^{\zeta_k} A^t W {A^t}^T$. We can show that
\begin{align}\label{c1:eq:cost-to-go2-perfect}
	V_{k}(\mathcal{I}^e_k) &= \min_{\delta_k} \Big\{\theta \delta_k + (1-\delta_k) \tilde{e}_k^T A^T \Gamma_{k+1} A \tilde{e}_k \nonumber\\
	&\qquad \qquad \qquad + \tr(\Gamma_{k+1} \Sigma_k) + \EXP[V_{k+1}(\mathcal{I}^e_{k+1})|\mathcal{I}^e_k]  \Big\}.
\end{align}
The minimizer in (\ref{c1:eq:cost-to-go2-perfect}) is obtained as $\delta_{k}^{\star} = \mathds{1}_{\voi_k \geq 0}$, where
\begin{align*}
	\voi_k = \tilde{e}_k^T A^T \Gamma_{k+1} A \tilde{e}_k - \theta + \varrho_k
\end{align*}
and $\varrho_k = \EXP[V_{k+1}(\mathcal{I}^e_{k+1})|\mathcal{I}^e_k, \delta_k = 0] - \EXP[V_{k+1}(\mathcal{I}^e_{k+1})|\mathcal{I}^e_k, \delta_k = 1]$. Besides, by Lemma~\ref{lemma4}, we know that $\EXP[V_{k+1}(\mathcal{I}^e_{k+1}) |\mathcal{I}^e_k, \delta_k]$ is a symmetric function of $\tilde{e}_k$. Hence, we conclude that $V_k(\mathcal{I}^e_k)$ is a symmetric function of $\tilde{e}_k$. This completes the proof.
\end{proof}

Theorem~\ref{thm:1} states that under the main information structure the value of information $\voi_k$, as characterized in (\ref{eq:voi-original}), depends on the sample path of the process, and is a symmetric function of the estimation mismatch $\tilde{e}_k = \textstyle \sum_{t = \zeta_k + 1}^{\eta_k} A^{t -1} w_{k - t}$. Moreover, it states that the information $x_{k-\zeta_k}$ should be transmitted from the sensor to the controller at time $k$ only if the value of information $\voi_k$ is nonnegative. Note that the value of information $\voi_k$ in this case can be computed with arbitrary accuracy by solving (\ref{c1:eq:cost-to-go2-perfect}) recursively and backward in time.

We show in following corollary that given the triggering policy in Theorem~\ref{thm:1} the minimum mean-square-error state estimate at the controller in fact matches with the state estimate used in the structure of the controller.

\begin{corollary}
Given the triggering policy $\pi^\star$, the conditional mean $\EXP[x_k | \mathcal{I}^c_k]$ satisfies (\ref{eq:est-at-cont}) with $\imath_k = 0$ for all $k \in \mathcal{K}$.
\end{corollary}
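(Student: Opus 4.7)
The plan is to proceed by induction on $k$, showing that the estimator $\hat{x}_k$ defined by the recursion (\ref{eq:est-at-cont}) with $\imath_k=0$ coincides with the conditional mean $\EXP[x_k|\mathcal{I}^c_k]$ under the optimal triggering policy $\pi^\star$. The base case $k=0$ is immediate from the initial condition $\hat{x}_0=m_0$. For the inductive step, assume $\hat{x}_k=\EXP[x_k|\mathcal{I}^c_k]$ and treat the two cases of (\ref{eq:est-at-cont}) separately.

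When $\delta_k=1$, the controller receives $x_{k-\zeta_k}$ together with $\zeta_k$ at time $k+1$; taking conditional expectation of $x_{k+1}$ written in terms of $x_{k-\zeta_k}$ directly yields the first branch of (\ref{eq:est-at-cont}). When $\delta_k=0$, what must be shown is $\EXP[x_{k+1}|\mathcal{I}^c_{k+1}]=A\hat{x}_k+Bu_k$, i.e., $\imath_k=0$; by the definition of $\imath_k$ and the inductive hypothesis, this reduces to proving $\EXP[e_k|\mathcal{I}^c_k,\delta_k=0]=0$.

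To establish this, I would decompose $e_k=(x_k-\check{x}_k)+\tilde{e}_k$. The first summand equals $\textstyle\sum_{t=1}^{\zeta_k}A^{t-1}w_{k-t}$, a sum of zero-mean Gaussian noises that are independent of $\mathcal{I}^e_k$ and hence of $\delta_k$, so its conditional mean vanishes. For the second summand, Lemma~\ref{lemma3} gives $\tilde{e}_k=\textstyle\sum_{t=\zeta_k+1}^{\eta_k}A^{t-1}w_{k-t}$, which is conditionally on $\mathcal{I}^c_k$ Gaussian with zero mean since those noises occur strictly after the freshest observation received by the controller. By Theorem~\ref{thm:1}, the optimal decision has the symmetric threshold form $\delta_k^\star=\mathds{1}_{\voi_k\geq 0}$ with $\voi_k$ a symmetric function of $\tilde{e}_k$, so $\{\delta_k^\star=0\}$ is a symmetric event in $\tilde{e}_k$; conditioning a symmetric distribution on a symmetric event preserves symmetry, hence preserves the zero mean.

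The main subtlety is to argue that, relative to $\mathcal{I}^c_k$, the extra information in $\mathcal{I}^e_k$ that feeds into $\delta_k^\star$ really does act symmetrically in $\tilde{e}_k$. This is precisely what the Markov-like structure of Lemma~\ref{lemma3} (the evolution of $\tilde{e}$ depends on the past only through $\tilde{e}_k$ and $\delta_k$) and the symmetry propagation of Lemma~\ref{lemma4} ensure, and it is what Theorem~\ref{thm:1} exploits to conclude that $V_k(\mathcal{I}^e_k)$, and consequently $\voi_k$, is a symmetric function of $\tilde{e}_k$; invoking these results closes the induction and yields $\imath_k=0$ for all $k\in\mathcal{K}$.
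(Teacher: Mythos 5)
Your proof is correct and follows essentially the same route as the paper's: induction on $k$, reduction of $\imath_k=0$ to showing $\EXP[\tilde{e}_k \mid \mathcal{I}^c_k, \delta_k=0]=0$, and the Bayes/symmetry argument exploiting the symmetric threshold of Theorem~\ref{thm:1} (the paper dispatches your first summand $x_k-\check{x}_k$ more compactly via the tower property, $\EXP[e_k\mid\mathcal{I}^c_k,\delta_k]=\EXP[\EXP[e_k\mid\mathcal{I}^e_k]\mid\mathcal{I}^c_k,\delta_k]=\EXP[\tilde{e}_k\mid\mathcal{I}^c_k,\delta_k]$). One wording caution: conditionally on $\mathcal{I}^c_k$, which contains the past decisions $\delta_0,\dots,\delta_{k-1}$ that are themselves functions of the same noises entering $\tilde{e}_k$, the mismatch $\tilde{e}_k$ is neither Gaussian nor independent of the conditioning (this is the signaling effect); it is only \emph{symmetric}, and that symmetry is exactly what the inductive propagation you flag in your final paragraph must, and does, supply.
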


\begin{proof}
Note that $\hat{x}_0 = \EXP[x_0 | \mathcal{I}^c_0]$ holds regardless of the adopted triggering policy. Assume that $\imath_t = 0$ holds for all $t < k$. Hence, we have $\hat{x}_t = \EXP[x_t | \mathcal{I}^c_t]$ for all $t \leq k$. We shall prove that the claim also holds at time $t = k$. We can write
\begin{align}
	\Prob(\tilde{e}_k | \mathcal{I}_k^c, \delta_k = 0) \propto \Prob(\delta_k = 0 |\tilde{e}_k)\Prob(\tilde{e}_k | \mathcal{I}_k^c).
\end{align}
By the hypothesis assumption, Lemma~\ref{lemma3}, and Theorem~\ref{thm:1}, we see that $\Prob(\tilde{e}_k | \mathcal{I}_k^c)$  and $\Prob(\delta_k = 0 | \tilde{e}_k)$ are symmetric with respect to $\tilde{e}_k$. Hence, $\Prob(\tilde{e}_k | \mathcal{I}_k^c, \delta_k = 0)$ is also symmetric with respect to $\tilde{e}_k$. This means that $\EXP[\tilde{e}_k | \mathcal{I}_k^c, \delta_k = 0] = 0$. Besides, we can write
\begin{align*}
\EXP\Big[ e_k \big| \mathcal{I}_k^c, \delta_k \Big] &= \EXP \Big[ \EXP\Big[ e_k \big| \mathcal{I}_k^e\Big] \big| \mathcal{I}_k^c, \delta_k \Big] =\EXP\Big[ \tilde{e}_k \big| \mathcal{I}_k^c, \delta_k \Big]
\end{align*}
where in the first equality we used the tower property of conditional expectations and the fact that $\delta_k$ is a function of $\mathcal{I}^e_k$. Hence, $\imath_k = A ( \EXP[ x_k | \mathcal{I}^c_k, \delta_k =0 ] - \EXP[ x_k | \mathcal{I}^c_k]) = A \EXP [ e_k | \mathcal{I}_k^c, \delta_k = 0 ] = A \EXP[ \tilde{e}_k | \mathcal{I}_k^c, \delta_k = 0 ] = 0$. This completes the proof.
\end{proof}

In the rest of this section, we prove that there exists a condition associated with the information structure under which the value of information becomes completely expressible in terms of the age of information. Let the information set of the event trigger be a restricted set that includes only the timestamps of the informative observations, the timestamps of the communicated informative observations, and the previous decisions, i.e., $\mathcal{I}^{r}_k = \{ t-\zeta_t, t-\eta_t, \delta_s, u_s \ | \ 0 \leq t \leq k, 0 \leq s < k \}$, where for clarity we represented the set by $\mathcal{I}^r_k$ instead of $\mathcal{I}^e_k$. A triggering policy $\pi$ is now admissible if $\pi = \{ \Prob(\delta_k | \mathcal{I}^r_k) \}_{k=0}^{N}$, where $ \Prob(\delta_k | \mathcal{I}^r_k)$ is a Borel measurable transition kernel. We now have the following theorem in which we obtain the optimal triggering policy under the restricted information structure\index{information structure}, consisting of $\mathcal{I}^r_k$ and $\mathcal{I}^c_k$.

\begin{theorem}\label{thm2}
The optimal triggering policy $\pi^{\star}$ under the restricted information structure is a threshold policy given by $\delta_{k}^{\star} = \mathds{1}_{\voi_k \geq 0}$, where $\voi_k$ is the value of information expressed as
\begin{align}\label{eq:voi-age-version}
	\voi_k = \tr \big( \Gamma_{k+1} \textstyle \sum_{t=\zeta_{k}+1}^{\eta_k} A^t W {A^t}^T \big) - \theta + \varrho_k(\zeta_k,\eta_k)
\end{align}
and $\varrho_k(\zeta_k,\eta_k) = \EXP[V_{k+1}(\mathcal{I}^r_{k+1})|\mathcal{I}^r_k, \delta_k = 0] - \EXP[V_{k+1}(\mathcal{I}^r_{k+1})|\mathcal{I}^r_k, \delta_k = 1]$.
\end{theorem}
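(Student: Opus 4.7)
The plan is to mimic the dynamic-programming argument from the proof of Theorem~\ref{thm:1}, but carried out on the coarser filtration $\{\mathcal{I}^r_k\}$ instead of $\{\mathcal{I}^e_k\}$. The key structural observation is that under the restricted information structure every admissible decision $\delta_s$ is measurable with respect to past timestamps alone, so by induction the $\delta_s$ are functions of the primitive delays $\tau_0,\dots,\tau_s$; in particular, the driving noises $w_0,\dots,w_{k-1}$ remain independent of $\mathcal{I}^r_k$, which is the crucial fact that made the loss of sample-path information harmless.

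I would start from Bellman's equation
\begin{align*}
V_k(\mathcal{I}^r_k) = \min_{\delta_k\in\{0,1\}} \Big\{\theta\delta_k + \EXP\big[\|e_{k+1}\|^2_{\Gamma_{k+1}} + V_{k+1}(\mathcal{I}^r_{k+1})\,\big|\,\mathcal{I}^r_k,\delta_k\big]\Big\},
\end{align*}
with terminal condition $V_{N+1}\equiv 0$. Using the tower property and Lemma~\ref{lemma2},
\begin{align*}
\EXP\big[\|e_{k+1}\|^2_{\Gamma_{k+1}}\,\big|\,\mathcal{I}^r_k,\delta_k\big] = (1-\delta_k)\,\EXP\big[\tilde{e}_k^T A^T\Gamma_{k+1}A\tilde{e}_k\,\big|\,\mathcal{I}^r_k\big] + \tr(\Gamma_{k+1}\Sigma_k),
\end{align*}
where $\Sigma_k=\sum_{t=0}^{\zeta_k} A^t W (A^t)^T$. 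Lemma~\ref{lemma3} together with the above independence implies that $\tilde{e}_k\mid\mathcal{I}^r_k$ is zero-mean Gaussian with covariance $\sum_{t=\zeta_k+1}^{\eta_k} A^{t-1} W (A^{t-1})^T$, so the quadratic form collapses to $\tr\big(\Gamma_{k+1}\sum_{t=\zeta_k+1}^{\eta_k} A^t W (A^t)^T\big)$ after invoking the cyclic property of trace.

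Next I would prove by backward induction on $k$ that $V_k(\mathcal{I}^r_k)$ depends on its argument only through $(\zeta_k,\eta_k)$. The base case is trivial. Assuming the hypothesis at $k+1$, the stage cost above is already a deterministic function of $(\zeta_k,\eta_k,\delta_k)$, and the continuation value $\EXP[V_{k+1}(\zeta_{k+1},\eta_{k+1})\mid\mathcal{I}^r_k,\delta_k]$ depends only on $(\zeta_k,\eta_k,\delta_k)$ because the transition $(\zeta_{k+1},\eta_{k+1})$ is driven by $(\zeta_k,\eta_k,\delta_k)$ together with $\tau_{k+1}$, whose conditional distribution given $\mathcal{I}^r_k$ equals its known marginal. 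Rearranging the minimization isolates $\varrho_k(\zeta_k,\eta_k)=\EXP[V_{k+1}\mid\mathcal{I}^r_k,\delta_k=0]-\EXP[V_{k+1}\mid\mathcal{I}^r_k,\delta_k=1]$, and yields $\delta_k^\star=\mathds{1}_{\voi_k\ge 0}$ with $\voi_k$ as in~(\ref{eq:voi-age-version}).

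The main obstacle I anticipate is not the Bellman recursion itself but the justification that the closed-loop system reduces to an age-valued Markov decision process: one must carefully argue that the past inputs $u_s$ and decisions $\delta_s$ appearing in $\mathcal{I}^r_k$ carry no additional statistical content about $\tilde{e}_k$ beyond what $(\zeta_k,\eta_k)$ already encode. This relies on the independence between $(w_\cdot)$ and $(\tau_\cdot,\delta_\cdot)$ enforced by the restricted policy class, a property that fails under $\mathcal{I}^e_k$ and is precisely what forces the loss of sample-path dependence in~(\ref{eq:voi-age-version}) relative to~(\ref{eq:voi-original}).
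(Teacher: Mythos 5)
Your proposal is correct and follows essentially the same route as the paper: a backward dynamic-programming induction showing that $V_k(\mathcal{I}^r_k)$ depends only on $(\zeta_k,\eta_k)$, with the threshold policy and the expression~(\ref{eq:voi-age-version}) falling out of the stage-cost comparison. The only cosmetic difference is that you average the quadratic form in $\tilde{e}_k$ over $\mathcal{I}^r_k$ via Lemmas~\ref{lemma2} and~\ref{lemma3}, whereas the paper computes $\Cov[e_{k+1}\,|\,\mathcal{I}^r_k]$ directly from Lemma~\ref{lemma1}; the two computations give algebraically identical expressions, and your closing remark about the independence of the noises from the timestamp-measurable decisions is exactly the point the paper leaves implicit.
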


\begin{proof}
From the additivity of $V_k(\mathcal{I}^r_k)$, we obtain
\begin{align*}
	V_k(\mathcal{I}^r_k) = \!\min_{\Prob(\delta_k|\mathcal{I}_k^r)} \! \EXP\Big[\theta \delta_k + e_{k+1}^T \Gamma_{k+1} e_{k+1} + V_{k+1}(\mathcal{I}^r_{k+1}) \big|\mathcal{I}^r_k\Big]
\end{align*}
with initial condition $V_{N+1}(\mathcal{I}^r_{N+1}) = 0$. We prove that $V_k(\mathcal{I}^r_k)$ is a function of $\zeta_k$ and $\eta_k$. We assume that the claim holds at time $k+1$, and shall prove that it also holds at time $k$. Using Lemma~\ref{lemma1}, and applying few operations, we can write
\begin{align*}
	\EXP[e_{k+1} | \mathcal{I}^r_k] &= 0\\[1.75\jot]
	\Cov[e_{k+1} | \mathcal{I}^r_k] &= \delta_k \textstyle \sum_{t=0}^{\zeta_k} A^t W {A^t}^T + (1-\delta_k) \textstyle \sum_{t=0}^{\eta_k} A^t W {A^t}^T.
\end{align*}
This implies that
\begin{align*}
	\EXP \Big[e_{k+1}^T &\Gamma_{k+1} e_{k+1} \big| \mathcal{I}^r_k \Big] = \underset{\delta_k}{\EXP} \Big[ \delta_k \tr\big( \Gamma_{k+1} \textstyle \sum_{t=0}^{\zeta_k} A^t W {A^t}^T \big) \\[1\jot]
	&\qquad \qquad \qquad \qquad + (1-\delta_k) \tr \big( \Gamma_{k+1} \textstyle \sum_{t=0}^{\eta_k} A^t W {A^t}^T \big) \big| \mathcal{I}^r_k \Big].
\end{align*}
Hence, we can show that
\begin{align}
	V_{k}(\mathcal{I}^r_k) &= \min_{\delta_k} \Big\{\theta \delta_k + \delta_k \tr\big( \Gamma_{k+1} \textstyle \sum_{t=0}^{\zeta_k} A^t W {A^t}^T \big) \nonumber\\[0\jot]
	&\qquad + (1-\delta_k) \tr \big( \Gamma_{k+1} \textstyle \sum_{t=0}^{\eta_k} A^t W {A^t}^T \big) + \EXP[V_{k+1}(\mathcal{I}^r_{k+1})|\mathcal{I}^r_k]  \Big\}.\label{eq:optimaltity-eq-restricted}
\end{align}
The minimizer in (\ref{eq:optimaltity-eq-restricted}) is obtained as $\delta_{k}^{\star} = \mathds{1}_{\voi_k \geq 0}$, where
\begin{align*}
	\voi_k &= \tr \big( \Gamma_{k+1} \textstyle \sum_{t=0}^{\eta_k} A^t W {A^t}^T \big) - \tr\big( \Gamma_{k+1} \textstyle \sum_{t=0}^{\zeta_k} A^t W {A^t}^T \big) - \theta + \varrho_k
\end{align*}
and $\varrho_k = \EXP[V_{k+1}(\mathcal{I}^r_{k+1})|\mathcal{I}^r_k, \delta_k = 0] - \EXP[V_{k+1}(\mathcal{I}^r_{k+1})|\mathcal{I}^r_k, \delta_k = 1]$. Note that $\zeta_{k+1}$ and $\eta_{k+1}$ are functions of $\tau_{k+1}$, $\zeta_k$, and $\eta_k$. Since $\tau_{k+1}$ is an independent random variable with known distribution, we conclude that $V_k(\mathcal{I}^r_k)$ is a function of $\zeta_k$ and $\eta_k$. This completes the proof.
\end{proof}

Theorem~\ref{thm2} states that under the restricted information structure the value of information $\voi_k$, as characterized in (\ref{eq:voi-age-version}), becomes independent of the sample path of the process, and can be expressed in terms of the age of information at the event trigger $\zeta_k$ and that at the controller $\eta_k$. Note that the value of information $\voi_k$ in this case can be computed with arbitrary accuracy by solving (\ref{eq:optimaltity-eq-restricted}) recursively and backward in time. Finally, we remark that using the triggering policy in Theorem~\ref{thm2} instead of the triggering policy in Theorem~\ref{thm:1} leads to a degradation in the performance of the system because $\mathcal{I}^r_k$ contains less information than $\mathcal{I}^e_k$.

As before, given the triggering policy in Theorem~\ref{thm2} the minimum mean-square-error state estimate at the controller matches with the state estimate used in the structure of the controller.

\section{Numerical Example}\label{sec4}
Consider a scalar stochastic process defined by the model in (\ref{c1:eq:sys}) with state coefficient $A=1.1$, input coefficient $B=1$, noise variance $W = 1$, and mean and variance of the initial condition $m_0 = 0$ and $M_0 = 1$. Let the time horizon be $N = 200$, and the weighting coefficients be $\theta = 10$, $Q = 1$, and $R = 0.1$. Moreover, let the output of the process be given by the model in (\ref{outputmodelused}) with delay $d = 5$ and probability $p_k = 0.2$. For this system, we implemented the triggering policy characterized by Theorem~\ref{thm:1}. For a test run of the system, Figure \ref{fig:errors} shows the estimation error norm trajectories at the controller and at the event trigger, and Figure \ref{fig:AoIs} shows the age of information trajectories at the controller and at the event trigger. Furthermore, Figure \ref{fig:VoI} depicts the value of information and transmission event trajectories.

\begin{figure}[t!]
\center
\vspace{-2.5mm}
  \includegraphics[width= 0.95\linewidth]{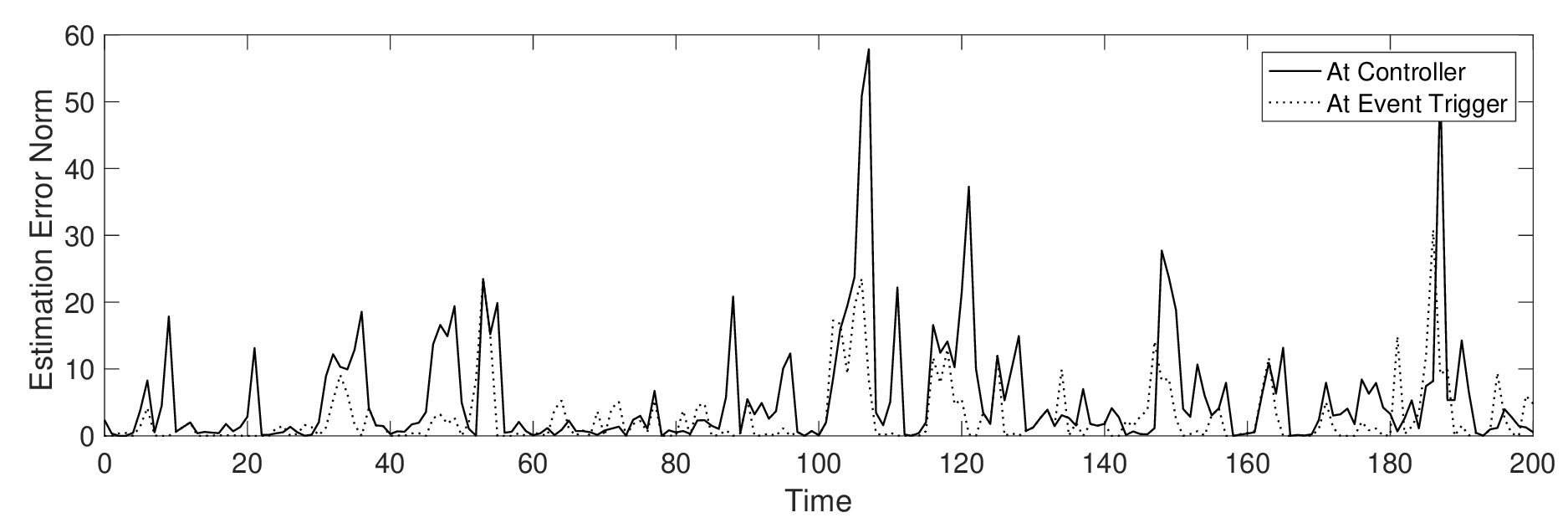}
  \caption{The estimation error norm trajectories at the controller and at the event trigger.}
  \label{fig:errors}
\end{figure}

\begin{figure}[t!]	
\center
\vspace{-2.5mm}
  \includegraphics[width= 0.95\linewidth]{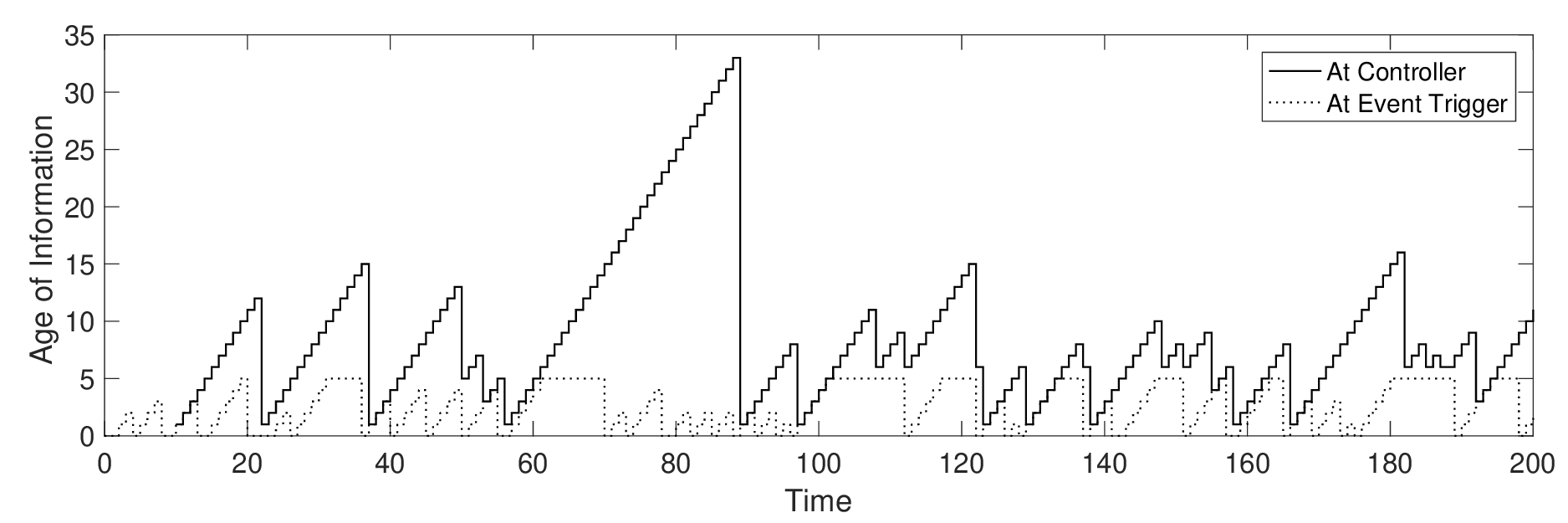}
  \caption{The age of information trajectories at the controller and at the event trigger.}
  \label{fig:AoIs}
\end{figure}

\begin{figure}[t!]	
\center
\vspace{-2.5mm}
  \includegraphics[width= 0.95\linewidth]{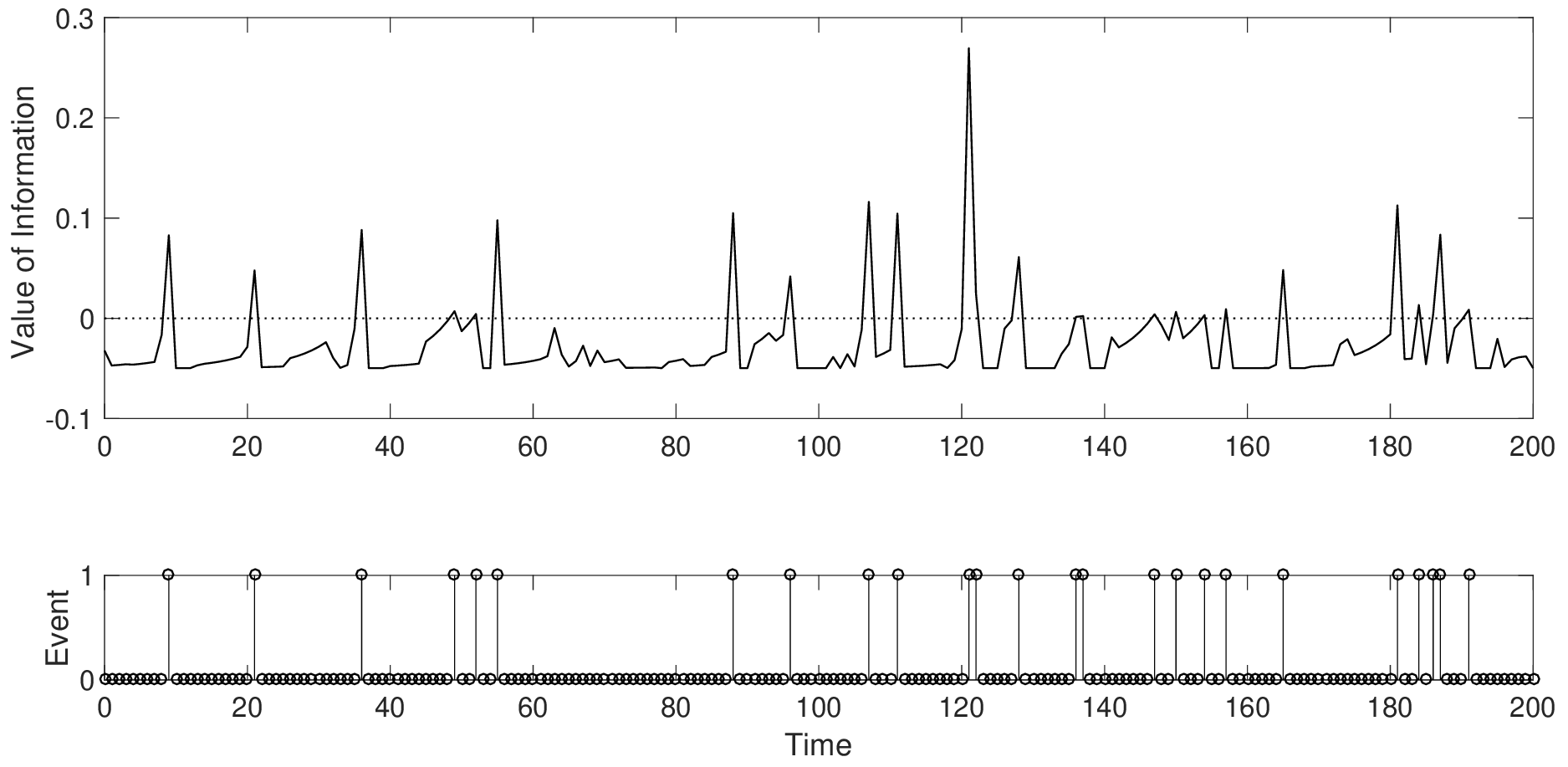}
  \caption{The value of information (scaled by $1/N$) and transmission event trajectories.}
  \label{fig:VoI}
\end{figure}

We recall that the optimal triggering policy in Theorem~\ref{thm:1} permits a transmission only when the value of information is nonnegative. The value of information itself is a function of the estimation mismatch. This implies that the value of information does not necessarily increase with the age of information at the controller or that at the event trigger. This key point can be observed in our numerical example. Consider the time interval $k\in [56,87]$. In this interval, the age of information at the controller increases continually. However, the value of information remains small, and no transmission occurs in this interval. This situation continues until time $k=88$ at which the value of information becomes nonnegative, due to the dramatic increment of the estimation mismatch. As a result, at this time a transmission occurs.

\section{Conclusion}\label{sec5}
In this chapter, we studied the value of information as a more comprehensive instrument than the age of information for optimally shaping the information flow in a networked control system. Two information structures were considered: the main one and a restricted one. We proved that under the main information structure the value of information is a symmetric function of the estimation mismatch. Moreover, we proved that under the restricted information structure the value of information becomes completely expressible in terms of the age of information. Accordingly, we characterized two optimal triggering policies of the form $\delta_k^\star = \mathds{1}_{\voi_k \geq 0}$. One policy is dependent on the sample path of the process and achieves the best performance, while the other one is independent of the sample of the process and achieves a lower performance.

\bibliography{../../../../mybib}
\bibliographystyle{ieeetr}

\end{document}